\documentclass[aps,preprint,superscriptaddress,nofootinbib]{revtex4-2}

\usepackage[T1]{fontenc}
\usepackage[utf8]{inputenc}
\usepackage{amsmath,amssymb,amsfonts,amsthm,mathtools}
\usepackage{graphicx}
\usepackage{booktabs}
\usepackage{multirow}
\usepackage{hyperref}
\usepackage{epstopdf}

\newtheorem{theorem}{Theorem}

\theoremstyle{definition}

\theoremstyle{remark}

\begin{document}

\title{The thermodynamic efficiency of coupled chaotic dissipative structures}

\author{\'Alvaro G. L\'opez}
\email{alvaro.lopez@urjc.es}
\affiliation{Nonlinear Dynamics, Chaos and Complex Systems Group, Department of Geology, Physics and Inorganic Chemistry, Universidad Rey Juan Carlos, 28933 M\'ostoles, Madrid, Spain}

\author{In\'es P. Mari\~no}
\affiliation{Nonlinear Dynamics, Chaos and Complex Systems Group, Department of Geology, Physics and Inorganic Chemistry, Universidad Rey Juan Carlos, 28933 M\'ostoles, Madrid, Spain}

\author{Alfonso Delgado-Bonal}
\affiliation{University of Maryland Baltimore County, Baltimore, MD, United States}
\affiliation{Climate and Radiation Laboratory, NASA Goddard Space Flight Center, Greenbelt, MD, United States}

\date{\today}

\begin{abstract}
Dissipative structures are open dynamical systems that sustain coherent macroscopic organization by continuously exchanging energy and matter with their environment and generating entropy. A recent thermodynamic analysis of the paradigmatic Malkus--Lorenz waterwheel interpreted the Lorenz system as an engine, deriving an exact formula for its thermodynamic efficiency, and showing that efficiency tends to increase as the system is driven far from equilibrium while displaying sharp drops near the Hopf subcritical bifurcation to chaos. Here, we extend that single-engine framework to coupled dissipative structures. We introduce two canonical couplings---master-slave coupling (series) and symmetric diffusive coupling (parallel)---and prove two fundamental association laws allowing us to reduce the composite systems to an equivalent engine with a specified efficiency. We then apply these abstract results to coupled Lorenz waterwheels, deriving efficiency formulas consistent with the underlying power balance. We perform numerical simulations confirming that (a) series coupling induces an increase in thermodynamic efficiency, (b) parallel coupling averages the efficiency of engines and increases total energy flow, (c) synchronization is typically neutral or beneficial for efficiency except in narrow parameter regions, and (d) coupling modifies the curvature of entropy-generation trends. Our theorems suggest a mathematically rigorous and transparent route to define and compute thermodynamic efficiency for generalized flow networks, with potential application to complex systems energetics.
\end{abstract}

\keywords{dissipative structures; Lorenz system; thermodynamic efficiency; entropy generation; coupled oscillators; synchronization; complex networks}

\maketitle

\section{Introduction}

The concept of \emph{dissipative structure} describes open systems that operate far from equilibrium by sustaining continuous flows of matter and energy and producing entropy \cite{Prigogine1968,Prigogine1978}. Across physics, chemistry and biology, such systems self-organize into coherent macroscopic cycles and oscillations, frequently through universal bifurcations that break fundamental symmetries. As these systems are driven further from equilibrium, they typically destabilize and experience chaotic dynamics, restoring such symmetries. Generally, dissipative structures are locally active systems \cite{MainzerChua2013}, and by symmetry-breaking they can amplify environmental energy fluctuations and convert them into coherent directed motion. In this regard, dissipative structures can be classified as active matter systems \cite{Vala24}. In the Malkus--Lorenz waterwheel, these ideas crystallize into a minimal and mathematically tractable engine whose reduced dynamics is equivalent (up to affine transformations) to the Lorenz system \cite{Strogatz2015}. 

A recent thermodynamic analysis \cite{Lopez2023Lorenz} interpreted the Malkus--Lorenz wheel as a genuine self-excited \emph{engine} \cite{cor36, jen13} operating between two reservoirs of matter (high-potential and low-potential energy water streams), derived an exact analytic expression for its thermodynamic efficiency as a function of phase-space variables and parameters. It showed a general tendency toward increased efficiency as driving increases, punctuated by abrupt drops at critical bifurcations leading to chaotic dynamics. In this simple system, thermodynamic efficiency is related to entropy generation \cite{Lopez2023Lorenz}, enabling a direct link between exergy degradation and thermodynamic irreversibility. The effects of energy rate-loss of matter on thermodynamic efficiency by producing negentropy were also addressed in previous works, while the stationary states of dissipative structures were connected with the physiological concept of homeostasis \cite{Lopez2023Lorenz}. 

Such analysis allowed for conceptualizing Boltzmann's suggestion of life’s ``struggle for existence'' as a struggle over entropy. Lotka further stated that natural selection favors organisms that are better at capturing available external energy and channeling it into processes that support survival \cite{Lotka1922}. Later on, Odum renamed this conceptual framework as the maximum power principle: selection should favor systems that increase overall energy flow, so that both efficiency and total power would be relevant for ecosystem development \cite{Odum1995}. This efficiency reflects how effectively a dissipative structure captures incoming resources, performs work, exports energy to its surroundings, and manages internal dissipation. However, it is not clear the extent to which these two factors contribute to Darwinian evolution, and how they materialize in the evolution of dissipative structures.

The present work aims to provide insight into this principle by moving from a single thermodynamic engine to \emph{coupled assemblies} and, ultimately, to \emph{general energy-flow networks} in complex systems, with potential application to ecosystem energetics \cite{SchneiderKay1994}. Ecosystems are comprised by intricate networks of trophic chains, which can be interpreted as a directed cascade of coupled thermodynamic engines, in the sense used throughout this work: each level receives an input exergy power flux, converts a fraction into ``useful'' functional work (biomass production, motion, reproduction, ecological work), and rejects the remaining fraction as degraded energy and waste streams (heat, respiration losses, detritus). 

In this interpretation, primary producers constitute a first stage (capturing an external gradient, e.g., radiative or chemical inflows), herbivores and successive predator levels form subsequent stages that are \emph{in series} because part of the rejected or available flux constrains their effective input from the previous level \cite{Fath2007ENA}. In parallel, decomposers constitute an additional branch that draws from the detrital pool generated across multiple trophic levels; thus, decomposers can be modeled as a \emph{parallel} engine coupled to the series chain of engines and a pump, in the sense that they share the overall ecosystem input indirectly while partitioning the total available flux into an additional pathway. Conceptually, one may therefore view this simplified scheme as an idealized ecosystem module formed by ``one pump'', ``four wheels in series'' (from producers to top predators) coupled to ``one wheel in parallel'' (decomposers), enabling the direct use of series--parallel association laws to define global efficiency and to separate topology-imposed constraints from dynamical flow selection \cite{Lindeman1942}.

Motivated by trophic chains and ecosystem energetics, which here are only used as an analogy, we study two canonical elementary coupling topologies: (i) directed \emph{series} coupling, in which the outflow of an upstream element feeds a downstream element forming a cascade; and (ii) \emph{parallel} coupling with diffusive exchange, in which similar engines are coupled to the same environment matter and energy reservoirs, and also symmetrically through exchange currents among them. The guiding question is whether thermodynamic efficiency can be meaningfully extended from a single dissipative structure to multi-node networks and whether topology yields provable constraints and composition rules for efficiency.

The present work is organized as follows. In Sec.~2 we introduce the Malkus-Lorenz waterwheel, briefly explain its fundamental thermodynamic properties as an engine, and the effects of chaotic dynamics on thermodynamic efficiency. Then, in Sec.~3, we arrange two waterwheels, showing how these systems can be physically coupled in series or in parallel, integrating more sophisticated systems of engines. We then proceed to prove two fundamental theorems that allow us to compute the net thermodynamic efficiency of the reduced equivalent system. Next, we apply these general laws to the computation of the thermodynamic efficiency of coupled Lorenz systems, illustrating their validity of the general laws through particular examples. Finally, we compute the entropy generation of coupled systems and explain from the theoretical perspective the modifications introduced in entropy-generation bifurcation curves as the external driving from the energy source is increased. 

\section{Thermodynamic efficiency and entropy generation in the Malkus--Lorenz engine}

We now introduce the fundamental self-excited dynamical system that is used to exemplify the effect of coupling dissipative units on the thermodynamic efficiency: the Malkus-Lorenz waterwheel \cite{mis06,kim17}. By studying minimally complex arrangements of water wheels, we pave the way towards the understanding of thermodynamic efficiency and entropy flows in more complex networks. 

\subsection{Malkus--Lorenz waterwheel and reduced dynamics}

The Malkus--Lorenz wheel is an open dissipative dynamical system that is fed with a continuous inflow of water $Q(\theta)$ at high potential energy, and which sustains cyclic motion by releasing part of the water income at a fractional rate $K$, thus degrading an external energy gradient. In practice, the wheel consists of compartments full of water on a tilted rotating ring, where $\theta$ denotes the position of each compartment along such ring. The water mass distribution can be approximated by specifying the linear mass density $\mu(\theta,t)$ along the ring. The dynamics of this extended system is given by the continuity equation ensuring mass conservation, and Newton's second law promoting the rotational motion of the wheel. The former reads
\begin{equation}
\dfrac{\partial \mu(\theta,t)}{\partial t}=Q(\theta)-K \mu(\theta,t)-\dfrac{\partial}{\partial \theta}\mu(\theta,t) \omega(t),
\label{eq:1}
\end{equation}
where $\omega(t)$ is the angular speed of the wheel, which, being solid, moves as a whole. The latter can be expressed by means of Euler's second equation, describing the rotation of the wheel, which yields the integro-differential equation
\begin{equation}
I \dot{\omega}(t)=-\nu \omega(t)+g r \sin\alpha \int_0^{2 \pi} \mu(\theta,t)\sin\theta d\theta,
\label{eq:2}
\end{equation}
where $\nu$ represented the friction from the magnetic brake, which ultimately flows into the surroundings in the form of heat. We have assumed that the moment of inertia of the wheel $I$ is fixed, which holds once the transient dynamics has exhausted. Under the standard Fourier expansion of the mass distribution along the rim of the wheel, we write
\begin{equation}
\mu(\theta,t)=\sum_{n=0}^\infty(a_n(t)\sin(n\theta)+b_n(t)\cos(n\theta)).
\label{eq:3}
\end{equation}
Assuming that the inflow of water is symmetrical, we can expand the income rate function $Q(\theta)$ as an even Fourier series with coefficients $q_n$. The resulting set of equations can be written for all the modes of the wheel as
\begin{equation}
    \begin{array}{lr}
       \dot{a}_n(t)= n \omega(t) b_n(t)-K a_n(t) \bigskip\\
       \dot{b}_n(t)= -n \omega(t) a_n(t)-K b_n(t) + q_n  \bigskip\\
       \dot{\omega}(t)=-\dfrac{\nu}{I} \omega(t)+\dfrac{\pi g r \sin\alpha}{I} a_1(t)
     \end{array}.
     \label{eq:4}
\end{equation}
Note that, because the torque produced by gravity changes harmonically along the ring of the wheel, only the first-order terms of the Fourier expansion of the mass distribution couple to the rotational motion. This yields a three-dimensional ODE system comprised of the equations
\begin{equation}
    \begin{array}{lr}
       \dot{a}_1(t)=  \omega(t) b_1(t)-K a_1(t)   \bigskip\\
       \dot{b}_1(t)= -\omega(t) a_1(t)-K b_1(t) + q_1  \bigskip\\
       \dot{\omega}(t)=-\dfrac{\nu}{I} \omega(t)+\dfrac{\pi g r \sin\alpha}{I} a_1(t)
     \end{array}.
     \label{eq:5}
\end{equation}
It can be mathematically demonstrated \cite{Strogatz2015} that this system is related to the Lorenz equations after an affine transformation, yielding the system of differential equations
\begin{equation}
    \begin{array}{lr}
       \dot{x}(t)= \sigma(y(t)-x(t)) \bigskip\\
       \dot{y}(t)= \rho x(t)-x(t)z(t)-y(t) \bigskip\\
       \dot{z}(t)= x(t)y(t)-z(t) 
     \end{array},
          \label{eq:6}
\end{equation}
where the variable $x(t)$ is related to the angular velocity of rotation, the variable $y(t)$  relates to $a_1(t)$, and $z(t)$ can be expressed in terms of $b_1(t)$. Furthermore, the non-dimensional Rayleigh $\rho=\pi g r q_1/\nu K^2$ and Prandtl $\sigma=\nu/I K$ numbers have been introduced. We highlight the importance of the first of these two parameters. As we shall see, the thermodynamic efficiency of the wheel can be expressed as a function of this parameter alone when it is far from equilibrium but in the regular regime, before the onset of chaos. 
\begin{figure}
\centering
\includegraphics[width=0.62\linewidth]{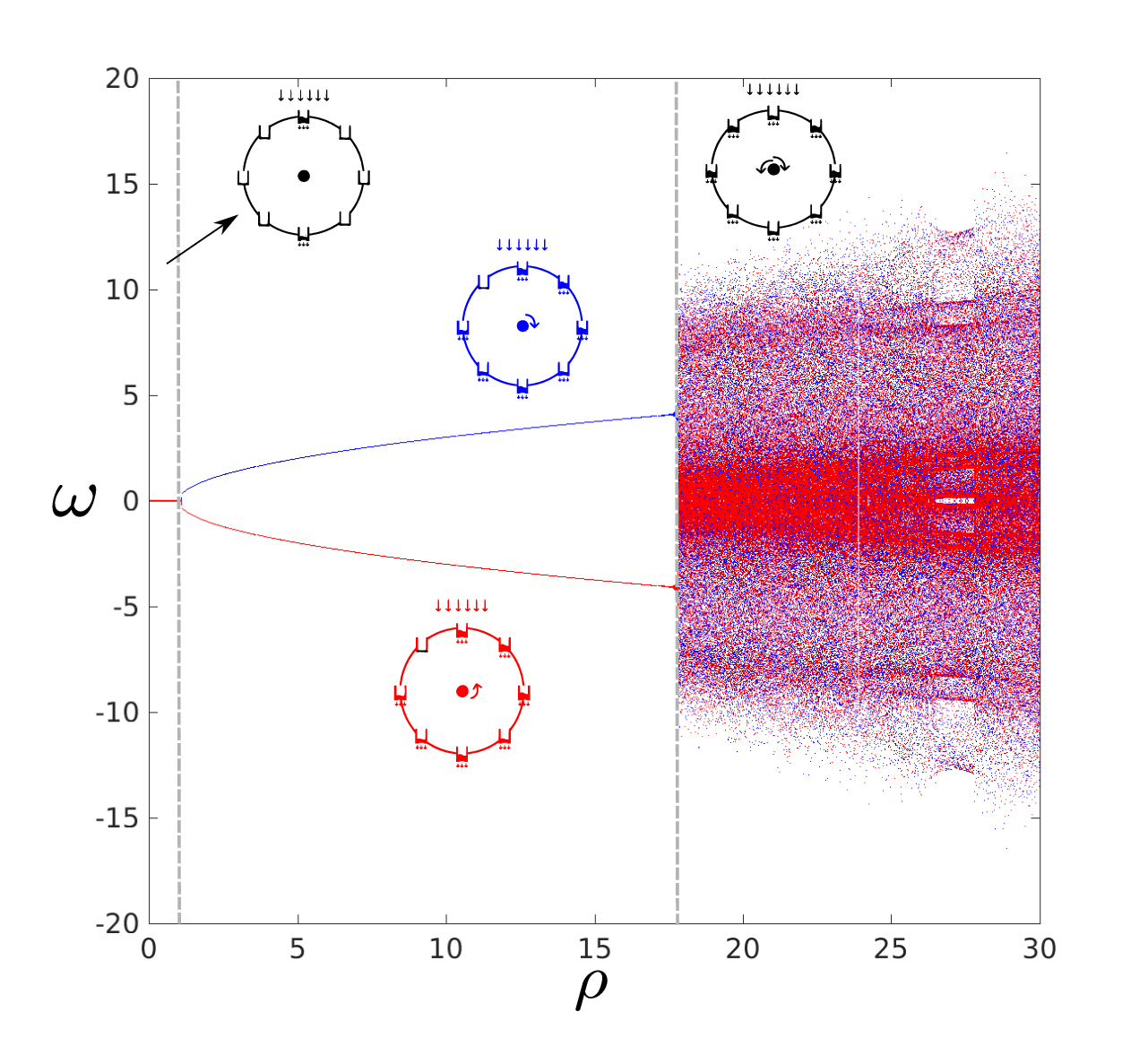}
\caption{The values of $\omega(t)$ at the various limit sets of the system as the Rayleigh parameter $\rho$ is varied. We distinguish three regimes. Firstly, a wheel at rest for $\rho < 1$ can be seen. Secondly, a supercritical Pitchfork bifurcation befalls at $\rho = 1$, and we find a wheel spinning uniformly clockwise or anticlockwise, depending on the arm of the diagram selected. Finally, a subcritical Hopf bifurcation takes place
for $\rho_c = 17.5$. The motion of the wheel becomes chaotic, alternating unpredictably from a clockwise to an anticlockwise rotation.}
\label{fig:fig1}
\end{figure}

This system presents three basic dynamical regimes (see Fig.~\ref{fig:fig1}). In the first regime (i), when the water inflow from the hose is too small ($\rho<1$) the system is passive, and a stationary flow of water transverses the wheel, which is at rest. Then, as the water income increases, a supercritical Pitchfork bifurcation takes place for $\rho=1$. In this second regime (ii) the wheel sets to uniform rotational motion, by choosing one of the two symmetrical branches (clockwise or anti-clockwise). This constitutes a process of spontaneous symmetry breaking of the cyclic group $C_2$ of the wheel. As the wheel is increasingly feed at a faster rate, a second subcritical Hopf bifurcation takes place for $\rho_c=\sigma(\sigma+4)/(\sigma-2)$. The resulting chaotic dynamics restores the symmetry of the system, and the phase space trajectory depicts the well-known strange Lorenz attractor. Now the wheel alternatively and unpredictably switches between the two modes of rotation.

\subsection{Efficiency as a dynamical variable}

The wheel ingests water with high gravitational potential and expels water at lower potential. Let $\dot E_p^{(\mathrm{in})}$ and $\dot E_p^{(\mathrm{out})}$ be the rates of incoming and outgoing gravitational potential power. According to the first law of thermodynamics, the energy balance \cite{Lopez2023Lorenz} gives $\dot W = \dot E_p^{(\mathrm{in})} - \dot E_p^{(\mathrm{out})} $ and the (time-dependent) efficiency
\begin{equation}
\eta(t) = \frac{\dot W}{\dot E_p^{(\mathrm{in})}} = 1 - \frac{\dot E_p^{(\mathrm{out})}}{\dot E_p^{(\mathrm{in})}}.
\label{eq:eta_def_general}
\end{equation}
This definition of work involves both the work done by the torque produced by gravity, as well as the potential energy of the wheel $\dot{W}=\dot{W}_{\tau}+\dot{E}_{p}$. Defined in this manner, we ensure that the thermodynamic efficiency is instantaneously between zero and one. Nevertheless, since we average the power in all our simulations over the attractor, in practice we have $\langle \dot{W} \rangle= \langle \dot{W}_{\tau} \rangle$, because $\langle \dot{E}_{p} \rangle = 0 $. For symmetric injection and leakage, one obtains the exact expression
\begin{equation}
\eta(t)=1-\frac{\pi K\,(2b_0(t)+b_1(t))}{2q(\beta+\sin\beta)},
\label{eq:eta_single_exact}
\end{equation}
where only the first even and odd Fourier coefficients contribute to $\dot E_p^{(\mathrm{out})}$ \cite{Lopez2023Lorenz}. In the second region representing the regime (ii) shown in the bifurcation diagram, where the rotation of the wheel is uniform, we can solve Eq.~(4) using the first two Fourier coefficients of $Q(\theta)$, where the water inflow is assumed to be constant at a rate $q$ between the angles $(-\beta,\beta)$, where the hose is placed. This yields the coefficients $q_0=\beta q/\pi$ and $q_1=2q\sin\beta/\pi$, which allow to obtain from the fixed point analysis the value of the efficiency in closed analytical form
\begin{equation}
\eta^{*}=\frac{\sin\beta}{\beta+\sin\beta}\left(1-\frac{1}{\rho}\right).
\end{equation}
In our case, considering the limit $\beta \rightarrow 0$ and renormalizing the thermodynamic efficiency to its maximum value $\eta^{*}_{max}=1/2$, yields the neat expression
\begin{equation*}
\eta_{II}^{*}=1-\frac{1}{\rho},
\end{equation*}
a formula that resembles the efficiency of Carnot's cycle, where the quotient of temperatures between the hot and cold reservoirs has been replaced by the quotient of income and outcome rates of water flow. These results have been recently corroborated in studies of fluid convection \cite{Li23}, from which the Lorenz model was originally derived \cite{Lorenz1963}.

In the case of chaotic dynamics, we cannot derive a closed expression for the efficiency because it evolves along the chaotic attractor. Nevertheless, we can average the efficiency, which is computed as a long-time average along the attractor, since the limit set is densely covered by a typical trajectory. This leads us to write the average thermodynamic efficiency as
\begin{equation*}
\langle \eta \rangle = \frac{\sin\beta}{\beta+\sin\beta}\left(1-\frac{K}{q_1}\,\langle b_1\rangle\right),
\end{equation*}
where the average of a dynamical variable $g(t)$ is computed for non-periodic trajectories as
\begin{equation*}
\langle g(t) \rangle = \lim_{\tau\to\infty}\frac{1}{\tau}\int_{0}^{\tau} g(t)\,\mathrm{d}t,
\end{equation*}
which remains bounded for strange attractors. The general trend of thermodynamic efficiency is plotted in Fig.~2. Except for the bifurcation \emph{critical points} leading to chaotic dynamics, there is a general tendency of efficiency to increase, even beyond the chaotic regime, where periodic windows abound \cite{Lopez2023Lorenz}.

\subsection{Entropy generation and exergy destruction}

Assuming thermal equilibrium with the environment at temperature $T$ and using the equivalence between frictional work and heat \cite{Lopez2023Lorenz}, the entropy exchange rate is $-\nu\omega^2/T$.
In a stationary state, the second law yields the entropy-generation rate
\begin{equation}
\dot S_{\mathrm{gen}}=\frac{\nu \omega^2}{T}\ge 0.
\label{eq:Sgen_single}
\end{equation}
Exergy destruction is directly related to entropy production \cite{Lopez2023Lorenz}, and satisfies $\dot X_{des} = T \dot S_{gen}$. Again, in regime (ii), we can obtain the closed analytical formula
\begin{equation*}
\omega^{*}=K\sqrt{\rho-1},
\end{equation*}
which yields
\begin{equation*}
\dot S_{\mathrm{gen}}^{*}=\frac{\nu K^{2}}{T}\,(\rho-1),
\end{equation*}
as long as the Rayleigh number $\rho$ is greater than one and smaller than the critical value $\rho_c$, i.e., before the onset of chaos. Again, the general trend of entropy generation is plotted in Fig.~2 for the different regimes, along with the thermodynamic efficiency. As the reader can verify, the two trends are in direct relation of proportionality, so that more efficient machines produce more work output in proportion to their energetic income. However, since such energy production is ultimately dissipated into the environment yielding uniform dynamics, this entails more entropy generation, as well \cite{Lopez2023Lorenz}.
\begin{figure}
\centering
\includegraphics[width=0.98\linewidth]{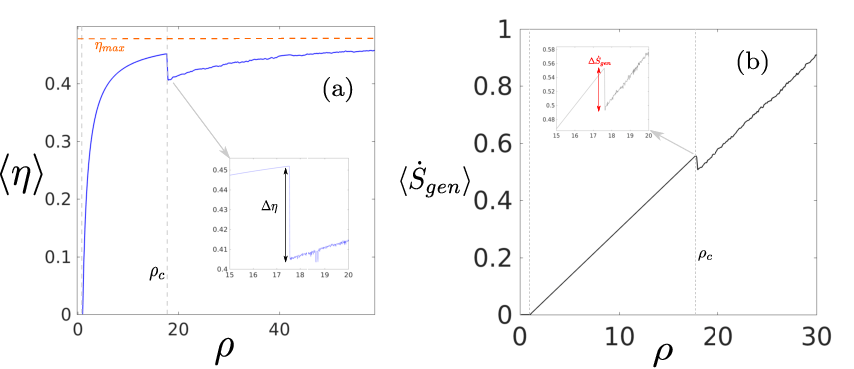}
\caption{The values of the (a) average efficiency $\langle \eta  \rangle$ of the Malkus-Lorenz waterwheel, and its (b) average entropy production $\langle \dot{S}_{gen}  \rangle$, as the input power of water $q$ is increased (and hence $\rho$). The parameters $I=1~kg \cdot m^2$, $g_{eff} = g \sin \alpha =10/\pi ~m \cdot s^{-2}$, $\nu=10~s^{-1}$, $K=1~s^{-1}$ have been set. As we can see, both magnitudes are positively correlated and tend to increase monotonically as the system is increasingly driven far from equilibrium, except for the first transition to chaotic dynamics.}
\label{fig:fig2}
\end{figure}

\section{Coupled dissipative structures: series and parallel couplings}

We now introduce two canonical couplings for the double-wheel, yielding two-engine assemblies. Despite its simplicity, we expect the present analysis to state fundamental thermodynamic laws, which might later be used to obtain equivalent thermodynamic dissipation laws for reduced complex networks, as it is commonly done in the theory of circuits \cite{Johnson2003}.

In both cases, each wheel has its own state variables $(a_i,b_i,\omega_i)$, flow rates $q_i$, and leakage parameters $K_i$. On the other hand, and for simplicity, in the present work, we assume that the wheels are identical in their structure $(r,\alpha,I)$ and rate of dissipation $\nu$. In this manner, asymmetries in the system are restricted to income and outcome flows of matter of each unit. Nevertheless, our conclusions are easily extensible to heterogeneous systems of wheels, since the theorems demonstrated below are independent on the specific dynamical system comprising the units. The coupling is implemented at the level of the Fourier-mode equations, consistent with the waterwheel derivation.

\subsection{Series coupling in cascade}

For series coupling, we can think of the wheels forming a tower with two levels, the second wheel being driven by the outflow or leakage of the first wheel, and nothing else. From an ecological point of view, if we think of a species population, this can be interpreted as a compartment feeding another compartment at a lower hierarchy level with its detritus. Alternatively, since the units of a species can die off and feed another species, we can also use the present coupling scheme to represent the predation of a certain level of the hierarchy, by a species at a higher level. Assuming no loss of water in-between, the rate of matter loss by the wheel above $K\mu_1(\theta)$ is exactly the water gained by the second wheel below. Thus, using Fourier analysis again, a representative reduced model reads
\begin{align}
\dot a_1(t) &= \omega_1(t) b_1(t) - K_1 a_1(t),\\
\dot b_1(t) &= -\omega_1(t) a_1(t) - K_1 b_1(t) + q_{1,1},\\
\dot\omega_1(t) &= -\frac{\nu}{I}\omega_1(t) + \frac{\pi g r\sin\alpha}{I}a_1(t),\\
\dot a_2(t) &= \omega_2 b_2(t) - K_2 a_2(t) + K_1 a_1(t),\\
\dot b_2(t) &= -\omega_2 a_2(t) - K_2 b_2(t) + K_1 b_1(t),\\
\dot\omega_2(t) &= -\frac{\nu}{I}\omega_2(t) + \frac{\pi g r\sin\alpha}{I}a_2(t),
\label{eq:series_ode}
\end{align}
where the primary system (master) here draws energy directly from the environment, feeds its cycle performing work, and pours its degraded outcome of matter to the second system (slave), which in turn uses this flow as an income of energy to produce its own cyclic work output, finally leaking its waste to the environment. 

These equations implement a minimal directed chain of two units connected in a cascade. It will be demonstrated in the following sections using the first law of thermodynamics that, since the intermediate flow is both leaked and absorbed within the total system, the total efficiency of the coupled structure is given by the income of the first wheel and the leakage of the second. Moreover, for arrangements in series, both wheels are producing work; the energetic distance between the material source of water and the reservoir to which the potential energy is released is always greater, and therefore, the efficiency of a series of wheels will always increase the overall thermodynamic efficiency, what favors the formation of hierarchies in nature.
\begin{figure}
\centering
\includegraphics[width=0.92\linewidth]{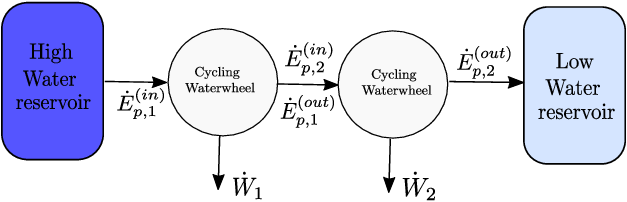}
\caption{Two Malkus-Lorenz waterwheels are coupled in series (master-slave), so that the waste of the first wheel, which is connected to the high water reservoir (master) is incorporated to the cycle of a second waterwheel (slave), which is connected to a low water potential energy reservoir.}
\label{fig:sw}
\end{figure}

In other words, by connecting machines in series, an increase in the external gradient between the reservoirs is attained. This is particularly relevant to the Earth's dynamics, since ecosystem development allows the Earth to stay cold, increasing the gradient with respect to the sun. From the point of view of thermal engines, the heat baths of the engine can be displaced dynamically, rendering the system more efficient. This contrasts with conventional non-evolving machines, where back-reaction from the fixed environment on the systems efficiency can be generally disregarded. Indeed, artificial non-evolving machines generally work with fixed thermal baths. When the environmental constraints are fixed, the only way to improve thermodynamic efficiency
is by reducing the irreversibilities, approaching the Carnot limit. However, in natural ecosystems machines couple and form complex networks, modifying the properties and limits of their environmental reservoirs, yielding a permanent feedback between species evolution and the reshape of the environment.

\subsection{Parallel coupling with diffusive exchange}

As a second fundamental arrangement, we consider a symmetric diffusion-like coupling of strength $c$ between the two engines. This exchange of water represents two dissipative structures that can be fed from an environmental common energy source and release their waste to the environment at a specific ``trophic level''. Strictly speaking, two wheels working in rigorous parallel should not communicate between them. However, in nature it is very frequent to find elements at a particular trophic level interacting. Furthermore, by means of these interactions dynamical engines at a certain level can synchronize and, when a regime of complete synchronization is attained, their diffusive couplings vanish \cite{Eroglu2017}. Therefore, we extend the parallel concept to encompass wheels that share their excess energy budget, which can be thought of as a form of redistribution of resources (\emph{e.g.} commensalism). The diffusion of water is easily modeled by imposing that any relative excess from one wheel is translated into its connected partner. A representative reduced model for the first harmonics is given by
\begin{align}
\dot a_1(t) &= \omega_1 b_1(t) - K_1 a_1(t) + c(a_2(t)-a_1(t)),\\
\dot b_1(t) &= -\omega_1 a_1(t) - K_1 b_1(t) + c(b_2(t)-b_1(t)) + q_{1,1},\\
\dot\omega_1(t) &= -\frac{\nu}{I}\omega_1(t) + \frac{\pi g r\sin\alpha}{I}a_1(t),\\
\dot a_2(t) &= \omega_2 b_2(t) - K_2 a_2(t) + c(a_1(t)-a_2(t)),\\
\dot b_2(t) &= -\omega_2 a_2(t) - K_2 b_2(t) + c(b_1(t)-b_2(t)) + q_{2,1},\\
\dot\omega_2(t) &= -\frac{\nu}{I}\omega_2(t) + \frac{\pi g r\sin\alpha}{I}a_2(t),
\label{eq:parallel_ode}
\end{align}
together with the corresponding $n=0$ (mean) modes, if desired. This coupling aligns with interactions presented in traditional investigations on the synchronization of self-sustained oscillators \cite{Eroglu2017}, which enables a discussion on the possible benefits of periodic and chaotic synchronization on the thermodynamic efficiency of far-from-equilibrium cycling systems.

Importantly, we insist that, when the wheels are completely symmetric ($K_1=K_2$ and $q_{1,1}=q_{2,1}$) and both wheels are completely synchronized $a_1(t)=a_2(t)$, $b_1(t)=b_2(t)$, and $\omega_1(t)=\omega_2(t)$, the diffusive coupling vanishes, yielding two dissipative structures that work in strict parallel between two material reservoirs without any interaction between them. The fact that the correlations persist even when the interaction term vanishes will be crucial for our result to be valid in cases where diffusive coupling is non-zero ($c \neq 0$). If such synchronization manifold exists and  it is stable, it is evident that the double-wheel system can be reduced to a single wheel system that draws potential energy at an average rate $\bar{q}_1=(q_{1,1}+q_{2,1})/2$ and also releases it at an average rate $\bar{K}=(K_{1}+K_{2})/2$. As we shall see, the fact that coupled wheels in parallel tend to average their efficiencies is a general feature of this sort of arrangements.
\begin{figure}
\centering
\includegraphics[width=0.75\linewidth]{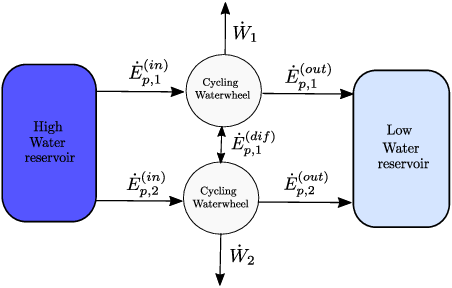}
\caption{Two Malkus-Lorenz waterwheels are coupled in parallel (diffusively), so that both wheels draw power from the high potential energy reservoir and, after partly converting it to work, they reject the degraded remaining energy to the low energy reservoir. Depending on the constant of diffusion, the waterwheels also tend to level-off their water content by transferring their relative excesses.}
\label{fig:pw}
\end{figure}

\section{Two fundamental association laws}

A complex network can be significantly reduced by synthesizing consecutive discrete elements that are combined in parallel or in series. Therefore, we now formalize the two core results as theorem statements. Consider an engine $E$ characterized by the income and outcome of energy, where the input power rate $\dot E^{(\mathrm{in})} > 0$, while the output (rejected) power rate $\dot E^{(\mathrm{out})}\ge 0$, and useful power $\dot W = \dot E^{(\mathrm{in})} - \dot E^{(\mathrm{out})}$.
Define efficiency
\begin{equation}
\eta = \frac{\dot W}{\dot E^{(\mathrm{in})}} = 1 - \frac{\dot E^{(\mathrm{out})}}{\dot E^{(\mathrm{in})}},\qquad 0\le \eta \le 1.
\label{eq:eta_abstract}
\end{equation}
This agrees with the waterwheel definition appearing in Eq.~\eqref{eq:eta_def_general}. From this definition applied to each waterwheel, as well as to the total system, we can obtain the fundamental association laws.

\subsection{Series association law}

We first consider the case of several engines connected in series, and prove that such an arrangement always increases or leaves equal the thermodynamic efficiency. This result suggests that long chains of cascading thermodynamic engines (\emph{i.e.} hierarchies with respect to primary resources) should be favored in nature, as long as downstream elements (slaves) are not dramatically deprived of energy income from upstream elements (masters). The theorem is stated as follows: 

\begin{theorem}
\label{thm:series}
Let $E_1,\dots,E_n$ be engines with efficiencies $\eta_i\in[0,1]$.
Assume a \emph{series} coupling in which the rejected stream of stage $i$ is the input of stage $i+1$:
$\dot E_{i+1}^{(\mathrm{in})}=\dot E_i^{(\mathrm{out})}$ for $i=1,\dots,n-1$.
Define the equivalent series engine by total input $\dot E^{(\mathrm{in})} = \dot E_1^{(\mathrm{in})}$ and total work $\dot W = \sum_{i=1}^n \dot W_i$.
Then the equivalent efficiency is
\begin{equation}
\eta_{\mathrm{s}}
= 1 - \prod_{i=1}^n (1-\eta_i).
\label{eq:eta_series_general}
\end{equation}

\end{theorem}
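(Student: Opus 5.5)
The plan is a direct telescoping argument based only on the abstract definition \eqref{eq:eta_abstract} and the series constraint $\dot E_{i+1}^{(\mathrm{in})}=\dot E_i^{(\mathrm{out})}$. First I will rewrite each stage's definition as a transfer law for the rejected stream: $\dot E_i^{(\mathrm{out})}=(1-\eta_i)\,\dot E_i^{(\mathrm{in})}$. Combined with the coupling condition, this gives the one-step recursion $\dot E_{i+1}^{(\mathrm{in})}=(1-\eta_i)\dot E_i^{(\mathrm{in})}$. By induction on $i$, I expect to obtain
\begin{equation*}
\dot E_{i}^{(\mathrm{in})}=\dot E_1^{(\mathrm{in})}\prod_{j=1}^{i-1}(1-\eta_j),\qquad \dot E_{n}^{(\mathrm{out})}=\dot E_1^{(\mathrm{in})}\prod_{j=1}^{n}(1-\eta_j).
\end{equation*}

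Second, I will sum the stage works. Since $\dot W_i=\dot E_i^{(\mathrm{in})}-\dot E_i^{(\mathrm{out})}=\dot E_i^{(\mathrm{in})}-\dot E_{i+1}^{(\mathrm{in})}$ for $i<n$, the sum $\sum_i \dot W_i$ telescopes to $\dot E_1^{(\mathrm{in})}-\dot E_n^{(\mathrm{out})}$. This is the first-law statement that the intermediate streams are internal to the composite engine, so the equivalent engine is defined consistently. Dividing by the total input $\dot E^{(\mathrm{in})}=\dot E_1^{(\mathrm{in})}>0$ and substituting the product expression for $\dot E_n^{(\mathrm{out})}$ yields \eqref{eq:eta_series_general}.

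The only delicate point is that each ratio $\eta_i$ must be well defined, which requires $\dot E_i^{(\mathrm{in})}>0$ for every stage. Zero input would occur if some upstream $\eta_j=1$. I will handle this degenerate case separately. There the product vanishes and all downstream stages carry no flux, so $\dot W=\dot E_1^{(\mathrm{in})}$ and $\eta_s=1$, which agrees with the formula for any convention on the idle $\eta_i$.

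Finally, I will note two consequences. Each factor $1-\eta_i$ lies in $[0,1]$, so $\eta_s\in[0,1]$. Moreover, $1-\eta_s\le 1-\eta_i$ for each $i$, so $\eta_s\ge\max_i\eta_i$; this is the monotonicity claim announced before the theorem. For time-averaged dynamical efficiencies, the same argument applies verbatim to the averaged powers $\langle\dot E\rangle$, provided efficiencies are defined as ratios of averages.
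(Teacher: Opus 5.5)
Your proposal is correct and follows the paper's proof: you use the transfer law $\dot E_i^{(\mathrm{out})}=(1-\eta_i)\dot E_i^{(\mathrm{in})}$, iterate the recursion from the series constraint, and identify $\dot W=\dot E_1^{(\mathrm{in})}-\dot E_n^{(\mathrm{out})}$, which you justify with an explicit telescoping sum that the paper asserts without derivation. Your extra remarks go beyond the paper but are sound: in the degenerate case $\eta_j=1$ you tacitly need $0\le\dot E_i^{(\mathrm{out})}\le\dot E_i^{(\mathrm{in})}$ for the idle downstream stages, and your ratio-of-averages version is consistent with the paper's own averaged efficiency definition.
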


\begin{proof}
By definition, $\dot E_i^{(\mathrm{out})} = (1-\eta_i)\dot E_i^{(\mathrm{in})}$.
Under series coupling, $\dot E_{i+1}^{(\mathrm{in})} = \dot E_i^{(\mathrm{out})}$, hence
$\dot E_{i+1}^{(\mathrm{in})} = (1-\eta_i)\dot E_i^{(\mathrm{in})}$.
Iterating gives $\dot E_n^{(\mathrm{out})} = \prod_{i=1}^n(1-\eta_i)\dot E_1^{(\mathrm{in})}$.
Total work is $\dot W = \dot E_1^{(\mathrm{in})} - \dot E_n^{(\mathrm{out})}$, therefore
$\eta_{\mathrm{s}} = \dot W/\dot E_1^{(\mathrm{in})} = 1 - \prod_{i=1}^n(1-\eta_i)$.
\end{proof}

For $n\ge 2$ and $0<\eta_i<1$, we have $\eta_{\mathrm{s}}>\max(\eta_i)$.
This is the analytic equivalent engine in series law. Just as a series of resistors increase the overall resistance of an electric circuit, here we see that series connection increases the overall efficiency of a system. In particular, for two stages we get
\begin{equation}
\eta_{\mathrm{s}} = \eta_1 + \eta_2 - \eta_1\eta_2,
\label{eq:eta_series_two}
\end{equation}
which nicely illustrates that the resulting efficiency is not simply the addition of efficiencies, since thermodynamic efficiency is bounded between zero and one. A particularly simple result concerns the case where all the engines are equally efficient. In such case, and assuming that we couple $n$ total engines, we obtain $\eta_s=1-(1-\eta)^n$, which in the limit $n  \rightarrow \infty$ tends to a perfect engine. Note, however, that this by no means violates Carnot's theorem, since by introducing long cascades we are forcefully displacing the ``cold'' (``hot'') material reservoirs to zero (infinity).

\subsection{Parallel association law}

Secondly, we consider the case of several engines in parallel, and prove that such an arrangement strictly averages the thermodynamic efficiency when there is no diffusion between the parallel elements. This suggests that parallel chains of cascading thermodynamic engines should be favored in nature as well, but not because of an increase in thermodynamic efficiency, but for the fact that the total energy flow through the network is increased. This role is intimately related to autocatalysis and population growth, which entails the creation of numerous thermodynamic units within a single compartment at the same hierarchy level through geometric growth \cite{Bacaer2011}, which then can communicate by coupling and synchronizing. The convenience of the geometric growth of species in ecosystem networks is that it serves as a basis for the successive selective pressures of natural selection, sculpting the evolutionary landscape \cite{Nowak2006}. The former provides machines working in parallel, while the latter favors the most efficient cascades, both mechanisms working in a permanent feedback loop. Mathematically, our second theorem can be stated as follows:

\begin{theorem}
\label{thm:parallel}
Let $E_1,\dots,E_n$ be engines operating in \emph{parallel} under a shared supply such that the total input power splits into nonnegative portions
$\dot E^{(\mathrm{in})}=\sum_{i=1}^n \dot E_i^{(\mathrm{in})}$ with $\dot E_i^{(\mathrm{in})}>0$.
Define weights $w_i = \dot E_i^{(\mathrm{in})}/\dot E^{(\mathrm{in})}$ so that $w_i(t)>0$ and $\sum_i w_i(t)=1$.
Then the equivalent parallel efficiency satisfies
\begin{equation}
\eta_{\mathrm{p}} = \sum_{i=1}^n w_i \eta_i.
\label{eq:eta_parallel_general}
\end{equation}
\end{theorem}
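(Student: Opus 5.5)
The plan is to argue exactly as for Theorem~\ref{thm:series}, working only from the abstract definition \eqref{eq:eta_abstract} applied to each engine and to the composite.

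First I will fix what the \emph{equivalent parallel engine} means, since the statement leaves it implicit. Its input is the shared supply $\dot E^{(\mathrm{in})}=\sum_i \dot E_i^{(\mathrm{in})}$. Its rejected power is the total stream sent to the common low reservoir, $\dot E^{(\mathrm{out})}=\sum_i \dot E_i^{(\mathrm{out})}$. Its work is therefore $\dot W=\sum_i \dot W_i$, which is the first law applied to the whole assembly.

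Second, I will use the per-engine relation $\dot W_i=\eta_i\,\dot E_i^{(\mathrm{in})}$, which follows from the definition of each $\eta_i$. Then
\begin{equation*}
\eta_{\mathrm p}=\frac{\dot W}{\dot E^{(\mathrm{in})}}=\sum_{i=1}^n \frac{\dot E_i^{(\mathrm{in})}}{\dot E^{(\mathrm{in})}}\,\eta_i=\sum_{i=1}^n w_i\eta_i .
\end{equation*}
Positivity of each $\dot E_i^{(\mathrm{in})}$ gives $w_i>0$ and $\sum_i w_i=1$. The result is therefore a convex combination, so $\min_i\eta_i\le\eta_{\mathrm p}\le\max_i\eta_i$, and in particular $0\le\eta_{\mathrm p}\le 1$. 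I will record this as the ``averaging'' statement.

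The one point needing care, and the main obstacle, is justifying additivity of the output and work streams when the diffusive exchange $c\neq0$ of Eq.~\eqref{eq:parallel_ode} is present. I will handle it by placing the boundary of each subsystem so that exchange currents between engines are internal. They then cancel pairwise in the sums, because the coupling is antisymmetric: what engine $i$ gains, engine $j$ loses. With that choice, only flows with the external reservoirs enter $\dot E^{(\mathrm{in})}$ and $\dot E^{(\mathrm{out})}$. Under this accounting, $\eta_i$ must be interpreted as the efficiency relative to the external input $\dot E_i^{(\mathrm{in})}$, with exchange contributions absorbed into that engine's net balance.

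For time-dependent flows, I will apply the identity instantaneously, with weights $w_i(t)$. For averaged quantities, I will apply it with $w_i=\langle\dot E_i^{(\mathrm{in})}\rangle/\langle\dot E^{(\mathrm{in})}\rangle$ and $\eta_i=\langle\dot W_i\rangle/\langle\dot E_i^{(\mathrm{in})}\rangle$. The algebra is identical in both cases because it is linear in the power rates.
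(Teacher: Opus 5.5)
Your core argument is exactly the paper's proof: $\dot W=\sum_i\dot W_i$, then $\dot W_i=\eta_i\dot E_i^{(\mathrm{in})}$, then divide by $\dot E^{(\mathrm{in})}$. Your choice to apply the identity to time-averaged rates, rather than averaging the instantaneous identity, also matches how the paper defines $\langle\eta_i\rangle$.

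Where you go beyond the paper, and where I would not follow you, is the extension to $c\neq0$.

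\emph{What the paper does.} It explicitly limits the law to $c=0$ or to complete synchronization. In \eqref{eq:eta_parallel_symmetric_1}--\eqref{eq:eta_parallel_symmetric_2} it books the exchanged power $D=|\dot E^{(\mathrm{dif})}_{p,1}|$ as part of the donor's outcome and the receiver's income. That way each $\eta_i$ keeps the form \eqref{eq:eta_abstract} and stays bounded by one.

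\emph{What your accounting does.} You keep $\dot E_i^{(\mathrm{in})}$ external and fold $\pm D$ into $\dot W_i$. The identity then holds by definition. But for the receiving wheel, $\dot W_2/\dot E_2^{(\mathrm{in})}=1+(D-\dot E_2^{(\mathrm{out})})/\dot E_2^{(\mathrm{in})}$, which exceeds $1$ whenever $D>\dot E_2^{(\mathrm{out})}$. This has three consequences:
\begin{itemize}
\item such an $\eta_i$ is not an efficiency in the sense of \eqref{eq:eta_abstract};
\item your step ``$\eta_i\in[0,1]$, hence $0\le\eta_{\mathrm p}\le1$'' no longer goes through;
\item the claim that parallel coupling averages the engines' efficiencies becomes a relabeling rather than a result.
\end{itemize}

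\emph{Why the paper restricts the law.} With the paper's bookkeeping, the receiver's work is $\eta_2(\dot E_2^{(\mathrm{in})}+D)$, not $\eta_2\dot E_2^{(\mathrm{in})}$. So $\sum_i w_i\eta_i\neq\eta_{\mathrm p}$, and a correction term is unavoidable. This is why the paper confines the law to the diffusionless or fully synchronized case; you should drop or rephrase your $c\neq0$ paragraph accordingly.
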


\begin{proof}
Total work is $\dot W = \sum_i \dot W_i = \sum_i \eta_i \dot E_i^{(\mathrm{in})}$.
Divide by $\dot E^{(\mathrm{in})}$ to obtain
$\eta_{\mathrm{p}} = \dot W/\dot E^{(\mathrm{in})} = \sum_i (\dot E_i^{(\mathrm{in})}/\dot E^{(\mathrm{in})})\eta_i = \sum_i w_i\eta_i$.
\end{proof}

This association law holds rigorously only in the case uncoupled waterwheels ($c=0$) in parallel, or when the system has completely synchronized ($\dot{E}^{(\text{dif})}_{p,i}=0$). Concerning the Malkus-Lorenz waterwheel, the weights are given by $w_i = q_ {1,i}/(q_{1,1}+q_{1,2})$, i.e. weights are the rate inflow fractions. However, things can be far more complicated when diffusion between parallel units is allowed, and generally the weights evolve in time $w_i(t)$, yielding a probability distribution that changes dynamically, as it is customary in the study of stochastic processes. Moreover, the resulting thermodynamical efficiency must be corrected when $\dot{E}^{(\text{dif})}_{p,i} \neq 0$, since diffusion modifies the incomes and outcomes of the subunits, which define the thermodynamic efficiency of such elementary systems. For two machines that are not fully synchronized, diffusion is active and the law can be written as $\eta_{\mathrm{p}} = w_1 \eta_1+w_2 \eta_2 - (\epsilon_1 (1-\eta_1)+\epsilon_2 w_2 (1-\eta_2))$, where $\epsilon_1=\dot{E}^{(\text{dif})}_{p,1}/(\dot E^{(\mathrm{in})}_{p,1}+\dot{E}^{(\text{dif})}_{p,1})$ and $\epsilon_2=\dot{E}^{(\text{dif})}_{p,1}/(\dot E^{(\mathrm{out})}_{p,2}+\dot{E}^{(\text{dif})}_{p,1})$. We stress that when units are fully synchronized there is no diffusion at all $\epsilon_i=0$, restoring the result of the previous theorem. We shall explore numerically the effects of diffusive coupling and synchronization on the total thermodynamic efficiency of the system. Note that, since $0 \leq \epsilon_i \leq 1$, according to the previous formula for $\eta_{\mathrm{p}}$, all the rest being equal, diffusive processes tend to reduce thermodynamic efficiency. Therefore, synchronization, when possible, should favor thermodynamic efficiency, since it makes $\epsilon_i$ tend to zero, rendering the system diffusionless.

Summarizing, we insist that the connection of dissipative structures in parallel does not increase efficiency in exergy degradation, but it rather increases the overall energy flow through the ecosystem \cite{Odum1995}, while interpolating the thermodynamic efficiency between the efficiencies of the two parallel subsystems. 

\section{Application to coupled Lorenz systems}

We now connect the abstract results derived in the previous section to explicit water wheels driven by gravitational potential-power terms and connected through water flows, extending the single-wheel calculation of Ref.~\cite{Lopez2023Lorenz} to coupled systems. There exist numerous studies involving networks of Lorenz systems \cite{Mat97,Baya2024}, and since the Lorenz system can be considered a universal model for chaotic and heterochaotic dynamics \cite{Lorenz1963,Saiki2018}, its nonlinear features make it an ideal candidate for constructing universal arbitrarily complex networks in order to examine the effects of network topology and dynamics on thermodynamic efficiency.

\subsection{Potential-power bookkeeping for two wheels}

For wheel $E_i$, define the gravitational potential energy (up to constants) using the same height function $h(\theta)=r\sin\alpha(1+\cos\theta)$ used in the single-wheel derivation \cite{Lopez2023Lorenz}. Using the Fourier coefficients of the mass density, one obtains for the potential energy output
\begin{equation}
\dot{E}^{(\mathrm{out})}_{p,i} = \pi g r\sin\alpha\,K_i\,(2b_{0i} + b_i),
\label{eq:Epout_i}
\end{equation}
in direct analogy with the single wheel. For symmetric injection (hose sector width $\beta$), the inflow power reads
\begin{equation}
\dot{E}^{(\mathrm{in})}_{p,1} = 2g r\sin\alpha\, q_{1}(\beta+\sin\beta),
\label{eq:Epin_i}
\end{equation}
again matching the single-wheel result. The previous two equations allow to compute the efficiency of series coupling, since $\dot{E}^{(\mathrm{in})}_{p,2}=\dot{E}^{(\mathrm{out})}_{p,1}$. 

Next, for each wheel, we retain the structural definition
\begin{equation}
\eta_i = 1 - \frac{\dot{E}_{p,i}^{\mathrm{(out)}}}{\dot{E}_{p,i}^{\mathrm{(in)}}},
\label{eq:eta_i_coupled_def}
\end{equation}
where $\dot{E}_{p,i}^{\mathrm{(in)}}$ and $\dot{E}_{p,i}^{\mathrm{(out)}}$ are as defined in the previous paragraph, plus any coupling-induced effective power entering wheel $i$, as in the case of diffusion. Importantly, when the dynamics of the wheels is chaotic, we have to average the flows of energy, and since now both the income and outcome can evolve in time due to diffusion, we must compute the efficiency using the equation
\begin{equation}
\langle  \eta_i \rangle \equiv 1 - \frac{\langle \dot{E}_{p,i}^{\mathrm{(out)}}\rangle}{\langle\dot{E}_{p,i}^{\mathrm{(in)}}\rangle},
\label{eq:eta_i_coupled_def}
\end{equation}
where the average is carried as before. We now provide explicit formulas for the series and parallel couplings, which will be used in the following sections to perform simulations.

For series coupling, the computations provide the corresponding per-stage expressions, with the downstream stage’s effective inflow determined by the upstream outflow. The formulas for the efficiency of the two units and the total efficiency are given by Eqs.~\eqref{eq:Epout_i} and \eqref{eq:Epin_i}. At the abstract level, the global series efficiency is governed by Theorem~\ref{thm:series}, consistent with the corresponding association law. These results are numerically confirmed in the following sections. We get for the efficiency of each machine
\begin{align}
\eta_{1}(t) &=1-\dfrac{\beta+K_1(\sin\beta/q_{1,1}) b_1(t)}{\beta+\sin\beta}, \\
\eta_{2}(t) &=1-\dfrac{\beta+K_2(\sin\beta/q_{1,1}) b_2(t)}{\beta+K_1(\sin\beta/q_{1,1}) b_1(t)}.
\label{eq:Epdif}
\end{align}
Finally, the total efficiency of the system can be computed as
\begin{align}
\eta(t) &=1-\dfrac{\beta+K_2(\sin\beta/q_{1,1}) b_2(t)}{\beta+\sin\beta}.
\label{eq:Epdif}
\end{align}

It is immediate to check that $\eta=\eta_1 + \eta_2 - \eta_1 \eta_2$, as we shall verify using numerical technique in the following sections.

Concerning parallel arrangements, from the latter Eqs.~\eqref{eq:Epout_i} and \eqref{eq:Epin_i}, it is immediate to derive the weights for the uncoupled wheels in parallel ($c=0$), as well. In the case of parallel engines under diffusive coupling ($c \neq 0$), we must also compute wheel 1 exchange of mass (and hence potential power) with wheel 2, and vice-versa. The computations derive exchange contributions
\begin{align}
\dot E_{p,1}^{\mathrm{(dif)}} &= \pi g r\sin\alpha\,c\left(2(b_{02}-b_{01}) + (b_2-b_1)\right),\\
\dot E_{p,2}^{\mathrm{(dif)}} &= \pi g r\sin\alpha\,c\left(2(b_{01}-b_{02}) + (b_1-b_2)\right).
\label{eq:Epdif}
\end{align}
and, therefore, $\dot E_{p,1}^{\mathrm{(dif)}} = -\dot E_{p,2}^{\mathrm{(dif)}}$. These terms vanish when the two engines share identical distributions, and they encode a physically transparent notion: diffusive coupling transports potential power from wheels having higher budget to lower loaded wheels, and thereby contributes to (or subtracts from) each wheel’s effective balance. Thus it is crucial to note that, given a particular wheel, the diffusive term can form part of its income or its outcome, depending on the relative budget as compared to neighboring wheels. This fact must be carefully taken into account when computing the thermodynamic efficiency. Thus for parallel symmetric diffusive coupling with diffusion coefficient $c$, the computations yield simplified expressions of the form 
\begin{align}
\eta_1(t) &= 1 - \frac{\pi K_1 b_{01}/q_1 + K_1 (\sin\beta/q_{1,1})\,b_1(t) + c \pi (b_{10}-b_{20})/q_1 + c(\sin\beta/q_{1,1})(b_1(t)-b_2(t))}{\beta+\sin\beta},
\label{eq:eta_parallel_symmetric_1} \\
\eta_2(t) &= 1 - \frac{\pi K_2 b_{02}/q_2 + K_2(\sin\beta/q_{2,1})\,b_2(t)}{\beta+\sin\beta +  c \pi (b_{10}-b_{20})/q_2 + c(\sin\beta/q_{2,1})(b_1(t)-b_2(t))} ,
\label{eq:eta_parallel_symmetric_2}
\end{align}
and the global efficiency coincides the weighted average $\eta = w_1\eta_1 + w_2\eta_2$ as long as $c$ is sufficiently hight and synchronization is attained. Note that we have assumed here that $b_1(t)>b_2(t)$, and therefore, the diffusion in this particular case goes from the first wheel to the second. The values of $b_{10}$ and $b_{20}$ are computed from the equilibrium values of the zeroth order differential equations resulting from the Fourier expansion. We also have that $q_{i,1}=2q_i \sin\beta/\pi$. When the sign changes to the opposite, these definitions must be changed by putting the input of diffusion in the first wheel, in the denominator of Eq.~\eqref{eq:eta_parallel_symmetric_1}. Importantly, we note that when the wheels are not synchronized $(\epsilon_i \neq 0)$, the diffusive terms are non-null $(\dot E_{p,1}^{\mathrm{(dif)}} \neq 0)$, and the total efficiency is reduced according to $\eta_{\mathrm{p}} = w_1 \eta_1+w_2 \eta_2 - (\epsilon_1 (1-\eta_1)+\epsilon_2 w_2 (1-\eta_2))$. This lack of synchronization can be considered a consequence of one of two facts: either the wheels are not symmetric concerning leakage or gain from the reservoirs (or other internal parameters), or the diffusive coupling is insufficient.					

\section{Numerical Results}

This section summarizes the key numerical and phenomenological findings reported in the presentation, stated here as observations to guide future systematic studies. All the integrations have been performed MATLAB's Dormand-Price adaptive Runge-Kutta method. Again, we address the numerical results, distinguishing the two fundamental arrangements: master-slave coupling and diffusive coupling. 
\begin{figure}
\centering
\includegraphics[width=0.48\linewidth]{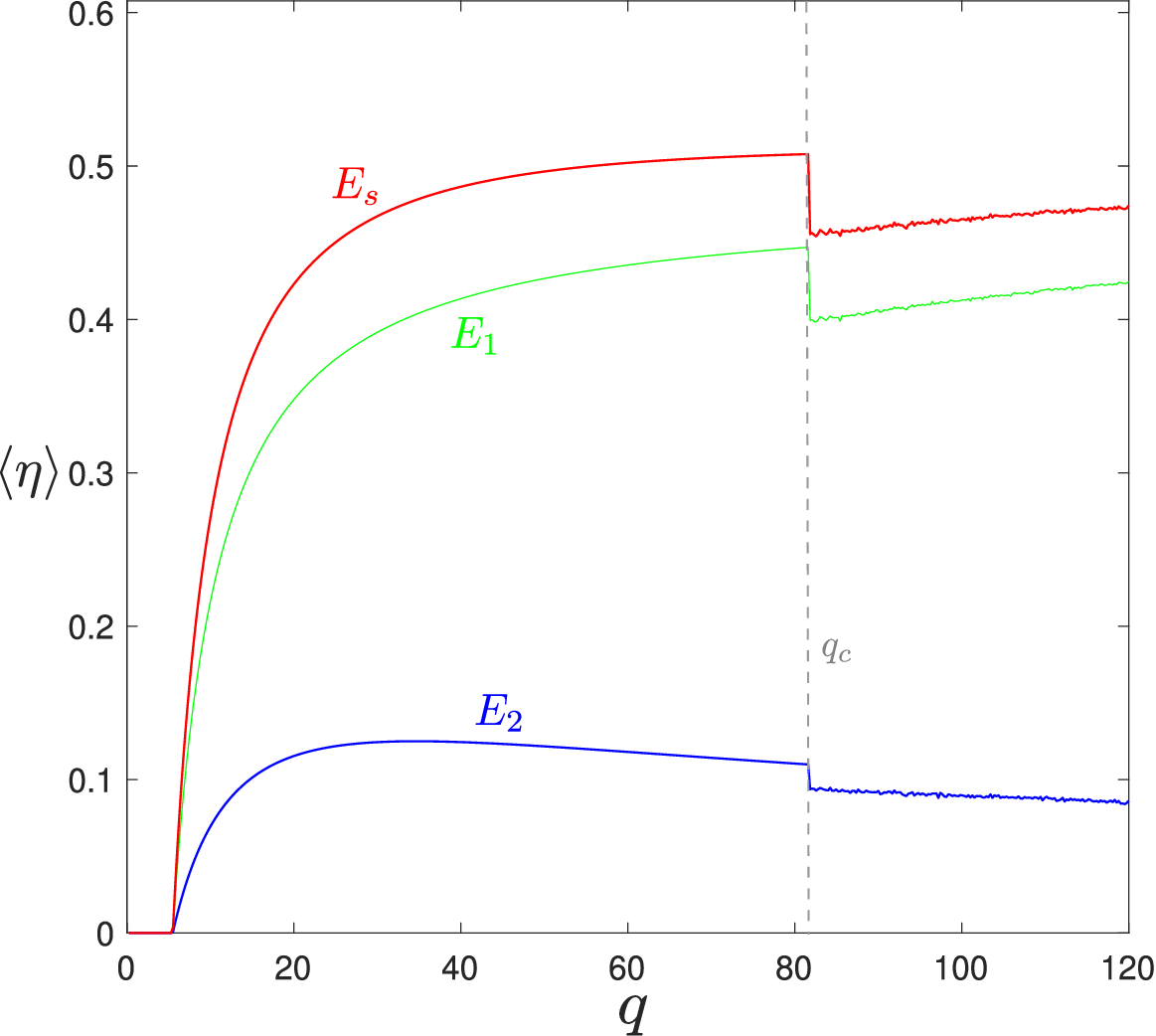}
\caption{Efficiency landscape under series (master--slave) coupling as a function of the rate of energy draw from the high potential energy/matter reservoir by the master system. The reported phenomenology suggests that the law $\langle \eta_s \rangle=\langle\eta_1\rangle+\langle\eta_2\rangle-\langle\eta_1 \rangle \langle\eta_2\rangle$ holds nicely. As we can see, the efficiency of the equivalent engine in series $E_s$ is always above $E_1$ and $E_2$, thus master-slave coupling enhance thermodynamic efficiency.}
\label{fig:fig5}
\end{figure}

\subsection{Series coupling}

For master--slave series coupling, numerical exploration reveals that, as we connect elements in series, we can indeed replace the chain for an equivalent machine whose efficiency is always
greater (see Fig.~\ref{fig:fig5}). This is done by displacing the energy gradient between the primary reservoir and the lowest energy reservoir, what suggests that the reservoirs to which energy is delivered and from which energy is drawn, can evolve and distance from each other. This contrast with conventional engines with fixed energy reservoirs, which can only increase their efficiency by becoming increasingly reversible, and approaching Carnot's efficiency. 

The theoretical results derived in the previous section hold nicely, where the curves resulting from the direct definition of $\langle \eta_s \rangle$ match to very high accuracy to the results attained by its computation from $\langle \eta_1 \rangle$ and $\langle \eta_2 \rangle$. The chosen parameters are $\nu=~10 s^{-1}$, $I=~1 kg ~\cdot m^{2}$, $g_{eff}=10/\pi~m \cdot s^{-2}$, $K_1=0.5~s^{-1}$ and $K_2=1.5~s^{-1}$, so that the engine $E_1$ has greater efficiency than the second wheel $E_2$. This is also a consequence of the fact that the deliver of power of the second engine depends on the release of the former, which is smaller than the loss by the rate of its losses $K_2$. A new important result as compared for these parameter setting is that the efficiency of the second wheel downstream does not increase monotonically with the input rate of water of the system $q$, which contradicts statements made in Ref.~\cite{SchneiderKay1994}. This is a consequence of the fact that the first wheel is taking advantage of this power by increasing efficiency. The rate of power loss from the master wheel depends on its dynamics, and therefore we do not need to obtain a monotonically increasing average efficiency with $q$ for elements downstream, as compared with $E_1$, whose efficiency increases monotonically as the system is driven far from equilibrium. Nevertheless, the entire system does increase its efficiency as it is pushed further from equilibrium. Interestingly, the transitions to chaotic dynamics occur at once for the cascade at $q_c=82$, with sudden drops in the thermodynamic efficiency for both wheels.
\begin{figure}[h]
\centering
\includegraphics[width=0.48\linewidth]{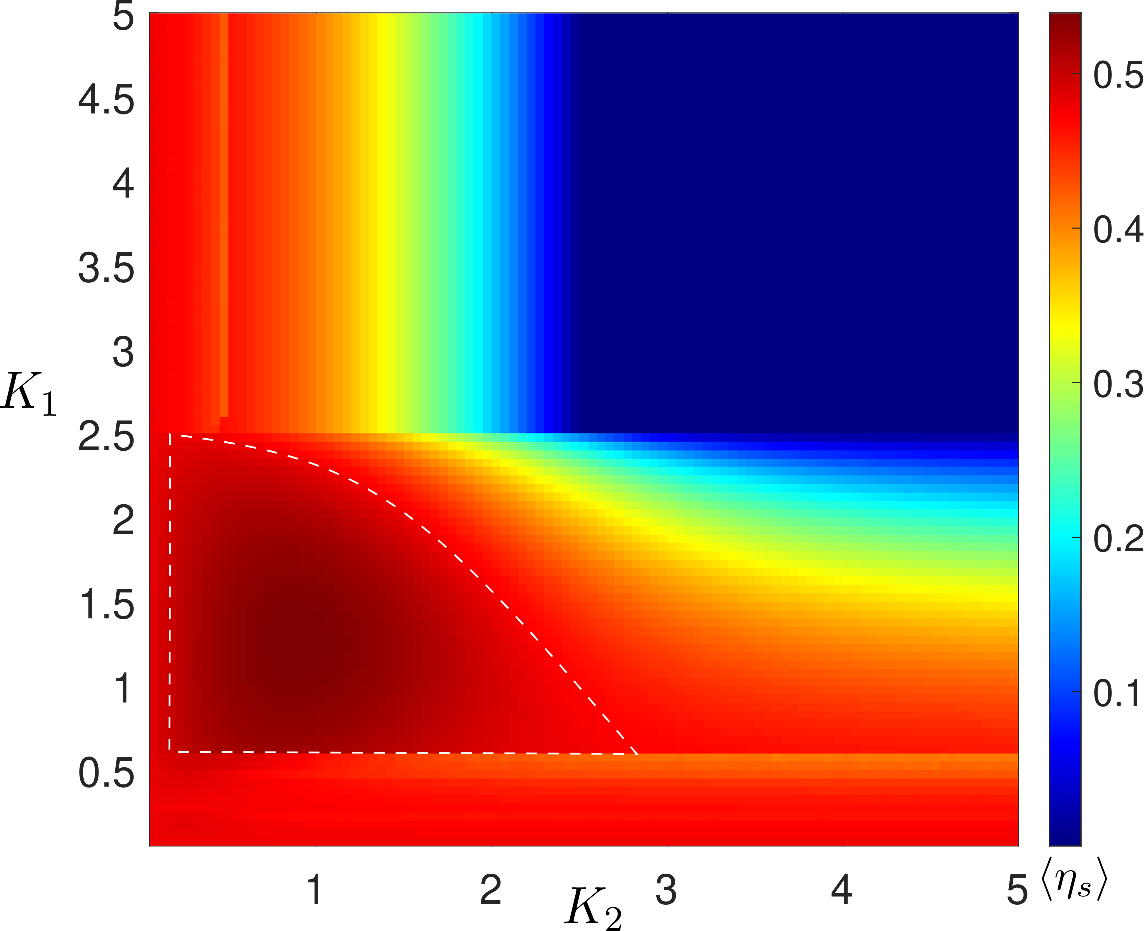}
\caption{Efficiency landscape under series (master--slave) coupling in the loss-rate parameter space $(K_1,K_2)$ of the two wheels. We find a region (dashed gray line) for which the efficiency is maximized.}
\label{fig:fig6}
\end{figure}

Numerical exploration also indicates the existence of a region of maximal efficiency in the $(K_1,K_2)$ (loss-rate) parameter space: too large upstream losses (high $K_1$) at the master engine $E_1$ reduce the available upstream budget losing opportunity to destroy exergy, while too-slow downstream capture (low $K_1$) reduces the effective flow at engine $E_2$, compromising global efficiency. Therefore, an intermediate value of $K_1$ seems to capture the highest thermodynamic efficiency, by considering the counterbalancing of these two effects. In other words, in the master-slave coupling, the master wheel must not keep all the flux for itself, since by converting its input of energy to work too efficiently, the slave wheel becomes highly inefficient, since little energy is left for the element underneath. This compromises the overall efficiency of the system, rendering a less efficient dissipative structure as a whole. 

This effect is more apparent as long as the losses of the second wheel are not too high, i.e., as long as the slave wheel is efficient in itself, given a certain rate of energy income. Indeed, there is no purpose on rejecting water more efficiently to the slave wheel if it is very inefficient  (e.g. by making $K_2$ very high). Summarizing, master-slave coupling benefits overall efficiency, but this effect is more noticeable when the master leaves a sufficient amount of energy for the slave, as long as this slave is efficient too.

\subsection{Parallel coupling and synchronization}

As we connect elements in parallel, we can replace the system of engines by an equivalent machine whose efficiency is the average of the previous. However, the total power flow by the system is considerably enhanced, thus parallel arrangements increase total energy flow. This effect can be clearly appreciated in Fig.~\ref{fig:fig7}. The set of parameter values chosen for these simulations are the same as before, except for $K_1=1~s^{-1}$ and $K_2=2~s^{-1}$,  and diffusion has been neglected in the first place ($c=0$). In all our simulations we consider equal rates of water inflow $q_{1,1}=q_{2,1}$. It is clearly appreciated that the resulting efficiency is an average of the efficiency of the two parallel subsystems. Since we have considered two wheels with different losses $K_1$ and $K_2$, which draw energy at the same rate given by $q$, the transitions to chaos take place at different values of $q$. Again, these asymmetry yields interesting scenarios, where the average efficiency can experience a drop due to partial transition to chaos of one of the subsystems, irrespective of the dynamics of the other parallel element. We have ran simulations for $c>0$ and the effects of diffusion is to redistribute the power, so that the individual efficiencies of the two wheels become closer to each other (see below), making the system more equalitarian and equally or more efficient, depending on the values of $q$.

It is interesting in this case to study the possible benefits of periodic and chaotic synchronization on thermodynamic efficiency. The numerical study posed the question of whether strong coupling and synchronization improves thermodynamic efficiency and reported that synchronization is mostly neutral or beneficial. However, strong coupling can be disadvantageous for small regions involving high energy loss of the second wheel (relative to the loss of the first wheel) and when the energy capture is small. In this case the non-transmitted flow along one path (the more efficient of the two) can outperform diversification.
\begin{figure}
\centering
\includegraphics[width=0.48\linewidth]{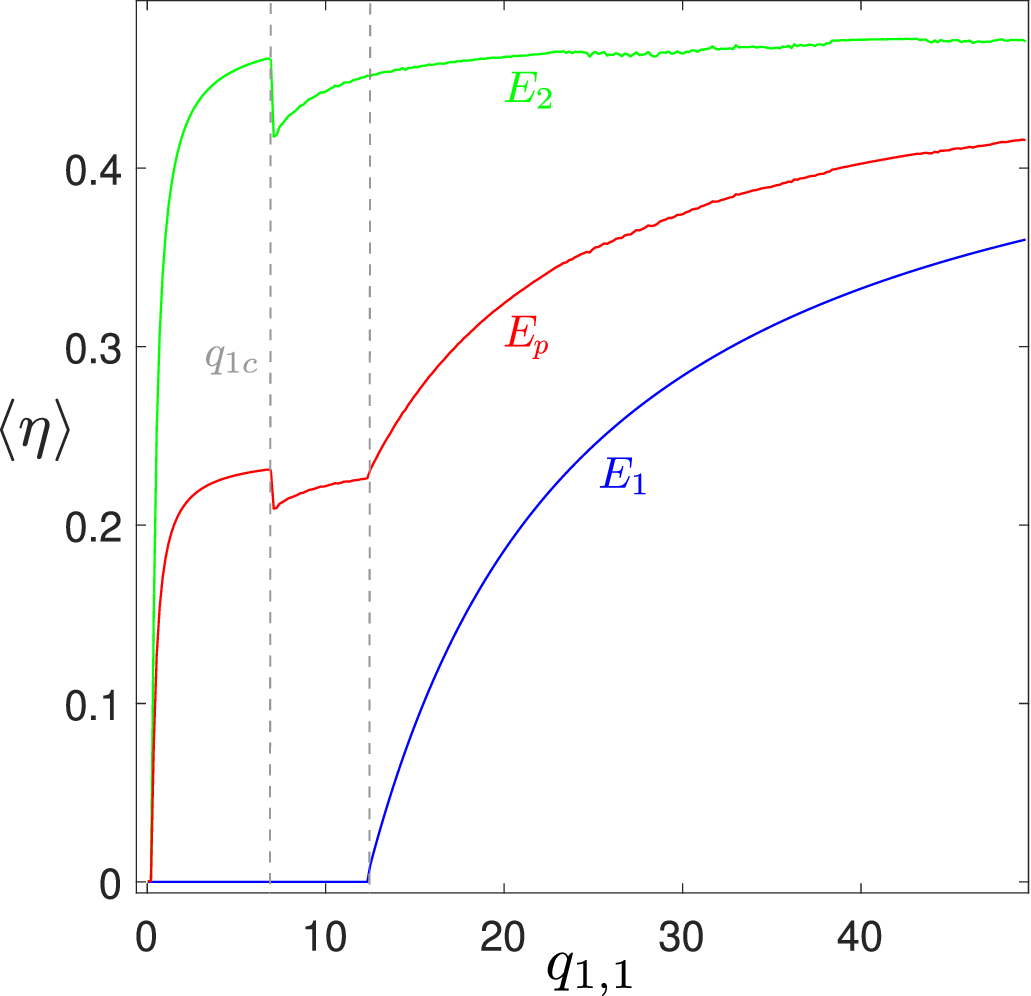}
\caption{Efficiency landscape under parallel (diffusive) coupling as a function of the rate of energy draw from the high potential energy reservoir of both elements ($c=0$). The reported phenomenology suggests that the thermodynamic efficiency averages for parallel elements. As we can see, the efficiency of the equivalent engine in parallel $E_p$ is always between $E_1$ and $E_2$.}
\label{fig:fig7}
\end{figure}

For the purpose of explaining these results, we have computed the efficiency of the whole system when $c=0$ (uncoupled), and when $c=20~s^{-1}$ (strongly coupled wheels). We have settled the value $K_1=0.5~s^{-1}$, as before. Then, we compute the difference in efficiency $\Delta \langle \eta_{\mathrm{p}} \rangle $ between the latter and the former. The particular domain of the regions that benefit from strong coupling is far form trivial (see Fig.~\ref{fig:fig8}). But since only the black and white tusk for high values values of $K_2$ is truly inefficient (here the great asymmetry in the parameters prevents synchronization), as a general rule of thumb we can say that, except when there is a very inefficient system (i.e. small values of $q$ and high values of $K_2$) being coupled to a very efficient one (i.e. high values of $q$ and small values of $K_1$), strong coupling and subsequent synchronization seems to be generally beneficial or innocuous. When a wheel is very efficient as compared to another, coupling is harmful for the overall system, since the redistribution of matter is lost by being diverted to an inefficient pathway.

\subsection{Entropy-generation curvature under coupling}

Concerning entropy generation, coupling modifies the curvature of entropy-generation trends: diffusive coupling increases entropy-generation curvature (consistent with diffusion being entropically enhancing), whereas series coupling can produce opposite curvature effects by making downstream stages dependent on upstream waste streams.
\begin{figure}
\centering
\includegraphics[width=0.48\linewidth]{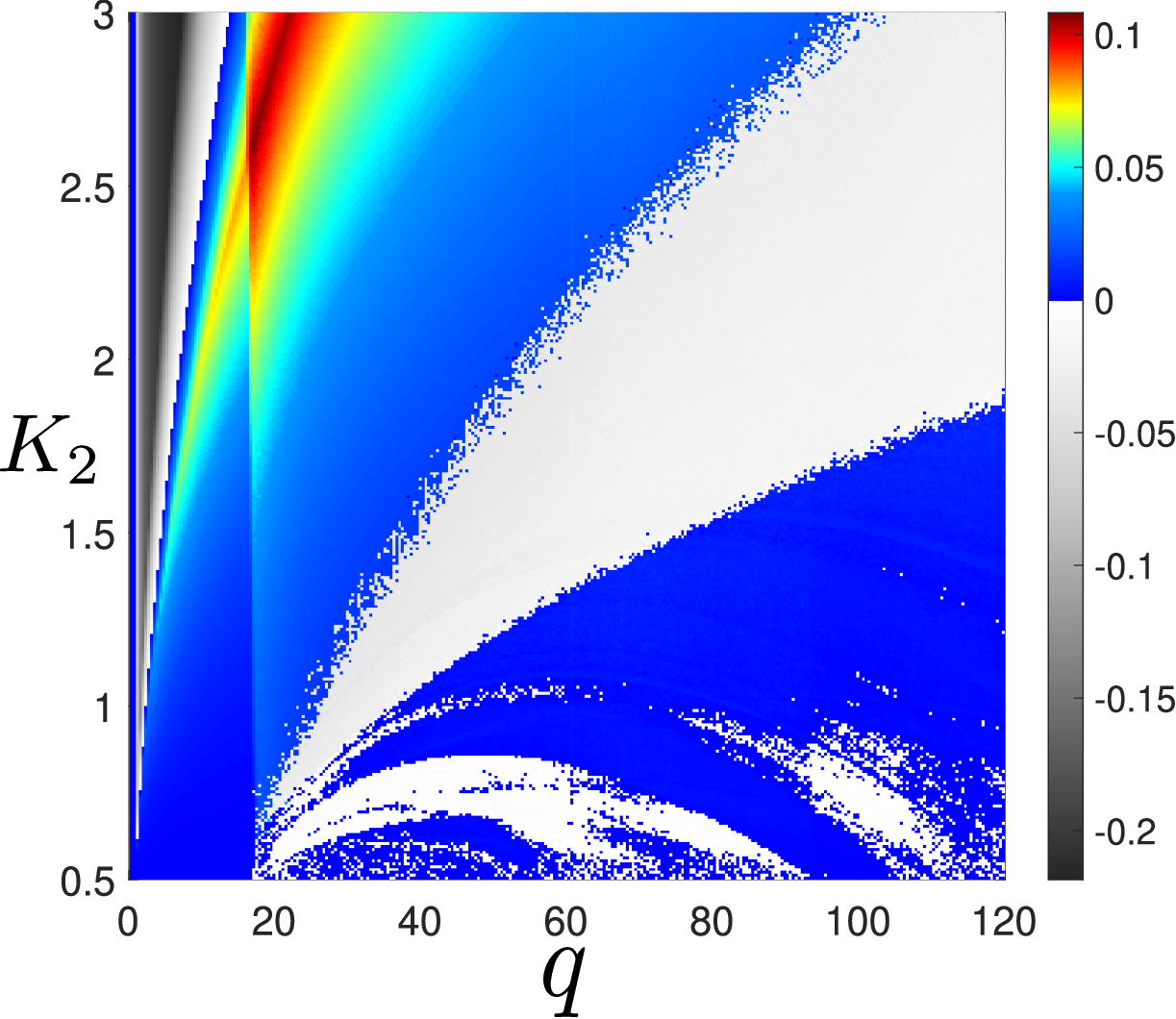}
\caption{Efficiency difference $\Delta \langle \eta_{\mathrm{p}} \rangle $ under parallel (diffusive) coupling in the gain-loss parameter space $(q,K_2)$ of the second wheel. The color bar represents the difference in efficiency for strongly coupled wheels ($c=20$) and for uncoupled wheels ($c=0$). We find a colored region for which the efficiency is enhanced by synchronization. The white regions involve parameter space domains where synchronization is mostly irrelevant, and only for very high $K_2$ and very small $q$ coupling is really harmful, corresponding to very asymmetric wheels.}
\label{fig:fig8}
\end{figure}
The results are depicted in Fig.~\ref{fig:fig9}, where we see that the slave wheel in series coupling shows concave curvature due to downstream dependence on upstream rejected flows. This contrasts to a simple wheel and also to the master wheel, where there is no curvature in entropy trends as the system is posed far from equilibrium \cite{Lopez2023Lorenz}. Moreover, for the salve wheel we appreciate that the rate of entropy loss at the transition to chaos is enhanced across the Hopf bifurcation, which is a completely new feature, due to the non-constant rate of income of matter coming from the master wheel, which suddenly becomes inefficient due to chaotic dynamics. On the other hand, for parallel wheels we appreciate that the curvature can be positive (convex) as a consequence of diffusion processes, which enhance entropy generation. This again contrasts with non-coupled wheels, that have linear increases as the system is posed far from thermodynamic equilibrium \cite{Lopez2023Lorenz}.

\section{Conclusion and discussion}

We extended the thermodynamic-efficiency framework of far-from-equilibrium engines to coupled dissipative structures. We proved two fundamental association laws: an additive series law and a weighted-average parallel law. We then specialized these theorems to coupled Lorenz engines using consistent potential-power bookkeeping, including explicit diffusive exchange contributions. Finally, we summarized numerical phenomenology suggesting maximal-efficiency regions under series coupling, mostly neutral-beneficial effects of synchronization, and coupling-dependent curvature changes in entropy generation. Together, these results provide a mathematically transparent bridge from single dissipative engines to generalized energy-flow networks, with prospective application to ecosystem-scale thermodynamic efficiency.

The single Lorenz engine already illustrates that thermodynamic efficiency can behave as a dynamical observable, generally increasing with driving yet suffering sharp discontinuities at bifurcation-driven crises in the attractor geometry. The coupled results developed here clarify how \emph{topology} alone imposes strict algebraic laws on efficiency in idealized series and parallel assemblies (Theorems~\ref{thm:series}--\ref{thm:parallel}), while the \emph{dynamics} determines the realized values of each stage’s efficiency and the stationary distribution of flows in more complex graphs \cite{Raux2024, DalCengio2023}.

The series law represented by Eq.~\eqref{eq:eta_series_general} implies that \emph{cascades} can increase global efficiency by relocating where the effective gradient is reduced. In ecological language, this mirrors the intuition that chains of consumers can exploit energy gradients staged across multiple processes. In contrast, the parallel organization trades efficiency for enhanced total throughput, consistent with the ``narrative of the maximum power'' in ecology and complex systems, as supported by H. T. Odum \cite{Odum1995}. The importance of productive chains has also been addressed in the study of economic networks \cite{carvalho2014}.
\begin{figure}
\centering
\includegraphics[width=0.92\linewidth]{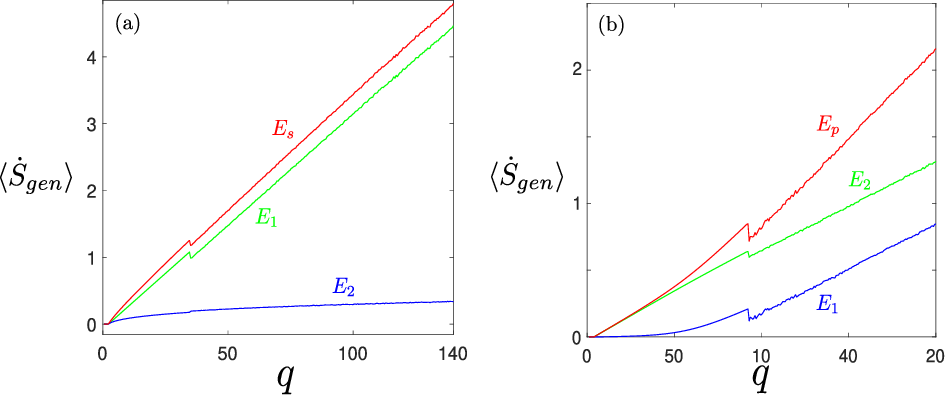}
\caption{Curvature changes in entropy generation trends under series and parallel coupling. Diffusive (exchange) coupling (b) tends to increase entropy-production curvature, while series coupling (a) can induce an opposite curvature due to downstream dependence on upstream rejected flows. Moreover, in this case the rate of entropy loss at the transition to chaos for the slave wheel is enhanced across the Hopf bifurcation, which is a completely new feature, due to the non-constant rate of income of matter coming from the master wheel, which suddenly becomes inefficient due to chaotic dynamics.}
\label{fig:fig9}
\end{figure}

The theorems~\ref{thm:series}--\ref{thm:parallel} suggest a mathematically transparent route to efficiency on networks. Let $G=(V,E)$ be a directed graph whose nodes are engines.
Let external sources inject power into a subset $V_{\mathrm{in}}\subset V$ and sinks remove rejected power from $V_{\mathrm{out}}\subset V$.
A network state defines edge flows $\dot E_{e}$ (power rates) consistent with local conservation:
incoming to node $v$ equals outgoing rejected flow plus useful work at $v$. Define global efficiency as
\begin{equation}
\eta_{\mathrm{n}} = \frac{\sum_{v\in V} \dot W_v}{\sum_{s\in V_{\mathrm{in}}} \dot E_s^{(\mathrm{in})}}.
\label{eq:eta_network_def}
\end{equation}
For \emph{series--parallel reducible} networks, repeated application of Theorems~\ref{thm:series}--\ref{thm:parallel} yields closed-form expressions. For general graphs that are irreducible, Eq.~\eqref{eq:eta_network_def} remains well-defined. However, given that our units are nonlinear dynamical systems, it is unclear if there exist association theorems capable of reducing the thermodynamic efficiency of the network to as simple scheme in terms of fundamental efficiencies, as it is the case of linear electric circuits, where the Thevenin and the Norton equivalents are frequently used to simplify circuits with complicated cycles of resistors and sources \cite{Johnson2003}.

In the long run, the natural mathematical generalization is to treat ecosystems as directed flow networks with node-level dissipation and work-like functional outputs, where efficiency is computed by Eq.~\eqref{eq:eta_network_def}. This opens two complementary research directions:

\begin{itemize}

\item \textbf{Graph-theoretic reduction and bounds.} Identify network classes where exact reduction to series--parallel blocks is possible and derive topology-dependent bounds for $\eta_{\mathrm{n}}$.

\item \textbf{Dynamical flow selection.} Couple the network efficiency definition to the dynamics that selects stationary flows (including synchronization and bifurcation structure), connecting thermodynamic performance to nonlinear stability and local activity \cite{MainzerChua2013,Arenas2008}.
\end{itemize}

The present work offers the fundamental tools and line of reasoning to accomplish these two goals in the future.

\begin{acknowledgments}
The authors thank professor Ruben Capeans for discussions on thermodynamic efficiency and coupled dissipative structures.
\end{acknowledgments}

\section*{Author Contributions}
Conceptualization \'A.G.L.; formal analysis, \'A.G.L., A.D.-B.; investigation, all authors; writing---original draft preparation, \'A.G.L.; writing---review and editing, all authors; supervision, A.D.-B. All authors have read and agreed to the published version of the manuscript.

\section*{Funding}
This work was partially supported by Spain’s Agencia Estatal de Investigación (ref. PID2024-158181NB-I00 NISA), funded by MCIN/AEI/10.13039/501100011033 and by the ERDF (“A way of making Europe”).

%\section*{Data Availability}
%No new data were created or analyzed in this study.

%\section*{Conflict of Interest}
%The authors declare no conflict of interest.

\end{document}